\documentclass[10pt,conference]{IEEEtran}

\usepackage{braket}
\usepackage{amsmath,amssymb,amsfonts,dsfont,amsthm}
\usepackage{algorithmic}
\usepackage{graphicx}
\usepackage{textcomp}
\usepackage{xcolor}
\usepackage{comment}
\usepackage[style=ieee]{biblatex}

\usepackage{lipsum}

\newcommand\blfootnote[1]{%
  \begingroup
  \renewcommand\thefootnote{}\footnote{#1}%
  \addtocounter{footnote}{-1}%
  \endgroup
}

\usepackage{flushend}

\newtheorem{theorem}{Theorem}
\newtheorem{corollary}[theorem]{Corollary}
\newtheorem{remark}[theorem]{Remark}

\newtheorem{definition}[theorem]{Definition}

\newlength{\blank}
\newcommand{\1}{\mathds{1}}
\DeclareMathOperator{\Tr}{Tr}

\newcommand{\EE}{\mathbb{E}}
\newcommand{\RR}{\mathbb{R}}

\newcommand{\cD}{\mathcal{D}}
\newcommand{\cE}{\mathcal{E}}
\newcommand{\cG}{\mathcal{G}}
\newcommand{\cL}{\mathcal{L}}

\newcommand{\cN}{\mathcal{N}}

\newcommand{\cS}{\mathcal{S}}

\newcommand{\cX}{\mathcal{X}}
\newcommand{\cY}{\mathcal{Y}}

\newcommand\myfrac[2]{\genfrac{}{}{0pt}{}{#1}{#2}}

\begin{document}

\title{Rate-Reliability Tradeoff for Deterministic Identification over Gaussian Channels}

\author{\IEEEauthorblockN{Pau Colomer\textsuperscript{1}}
\IEEEauthorblockA{\textit{Chair of Theoretical}\\
\textit{Information Technology}\\
\textit{Technische Universit\"at}\\
\textit{M\"unchen, Germany}\\
pau.colomer@tum.de}
\and
\IEEEauthorblockN{Christian Deppe\textsuperscript{2}}
\IEEEauthorblockA{\textit{Institute for Communications}\\
\textit{Technology}\\
\textit{Technische Universit\"at}\\
\textit{Braunschweig, Germany}\\
christian.deppe@tu-bs.de}
\and
\IEEEauthorblockN{Holger Boche\textsuperscript{2,3}}
\IEEEauthorblockA{\textit{Chair of Theoretical}\\
\textit{Information Technology}\\
\textit{Technische Universit\"at}\\
\textit{M\"unchen, Germany}\\
boche@tum.de}
\and
\IEEEauthorblockN{Andreas Winter\textsuperscript{1,4}}
\IEEEauthorblockA{\textit{Department Mathematik/Informatik}\\
\textit{Abteilung Informatik}\\
\textit{Universit\"at zu K\"oln}\\ 
\textit{Germany}\\
andreas.winter@uni-koeln.de}
}

\date{13 October 2025} 

\maketitle

\pagestyle{plain}

\begin{abstract}
We extend the recent analysis of the rate–reliability tradeoff in deterministic identification (DI) to general linear Gaussian channels, marking the first such analysis for channels with continuous output. 
Because DI provides a framework that can substantially enhance communication efficiency, and since the linear Gaussian model underlies a broad range of physical communication systems, our results offer both theoretical insights and practical relevance for the performance evaluation of DI in future networks. Moreover, the structural parallels observed between the Gaussian and discrete-output cases suggest that similar rate–reliability behaviour may extend to wider classes of continuous channels.
\end{abstract}

\begin{IEEEkeywords}
Post-Shannon theory;
message identification;
Gaussian channel;
reliability function;
finite block length.
\end{IEEEkeywords}

\blfootnote{The initial part (converse results) of this work has been accepted for presentation in the  \textit{2026 IEEE International Conference on Communications}, Glasgow, 24-28 May 2026 \cite{CDBW:Gaussian_ICC}. 

{\textsuperscript{1}Institute for Advanced Study, TUM, Garching, Germany.}\vspace{1pt}

{\textsuperscript{2}6G-life, 6G research hub,
Braunschweig and M\"unchen,
Germany.}\vspace{1pt}

{\textsuperscript{3}Cluster of Excellence, CeTI, Technische Universität Dresden, Germany.}\vspace{1pt}

{\textsuperscript{4}ICREA {\&} Universitat Aut\`onoma de Barcelona, Barcelona, Spain.}
}

\section{Introduction}
Modern communications are increasingly tasked to process bigger amounts of data, so improving the system efficiency becomes critical. In this context, the \emph{identification} paradigm, where a receiver is only concerned with checking if a particular message of his choosing is sent or not, arises as a promising model, especially for event-triggered and goal-oriented communications (molecular communication, digital watermarking, tactile internet, etc.) \cite{6G_Book,6G_Book_2}. The reason for this is that identification codes can achieve exponentially better performance than regular transmission of messages (the Shannon paradigm). Indeed, while the size of the $M$ transmittable messages grows linearly with the number $n$ of channel uses $\log M\sim nR$, it grows exponentially for the size of the $N$ messages that can be identified $\log N\sim 2^{nR}$ \cite{AD:ID_ViaChannels,HanVerdu:ID}. This result, however, is contingent on the use of randomisation on the encoder, which is not feasible for many applications. It is natural, then, to ask what happens for identification without randomisation also known as \emph{deterministic identification}.

\begin{definition}\label{def:DI_code}
A \emph{deterministic identification (DI)} code over $n$ uses of a channel $W:\cX\to\cY$ is a family of pairs $\{(u_i,\cD_i) : i=1,\dots,N\}$ with $u_i\in\cX^n$ input sequences and $\cD_i\subset\cY^n$ output regions such that for all $i\neq j\in [N]$:
\[
  W_{u_i}(\cD_i)\ge 1-\lambda_1 \quad\text{and}\quad W_{u_i}(\cD_j)\le\lambda_2,
\]
where $W_{x^n}(\cS):=W(\cS|x^n)$ is the probability of the channel outputting a sequence in $\cS\subset\cY^n$ when $x^n\in\cX^n$ was input. The conditions above bound the two errors that can be committed in an identification scheme: a \emph{missed identification}, when the receiver wants to identify the message sent but the outcome of the test is negative; and a \emph{wrong identification}, when the messages are different but the test is positive.
\end{definition}

Ahlswede, Dueck and Cai already noted \cite{AD:ID_ViaChannels,AC:DI}, and it was later proved in \cite{SPBD:DI_power}, that DI over discrete memoryless channels (DMCs) could only achieve linear-scaling codes $\log N_{DMC}\sim nR$ (albeit with higher rates than in transmission). However, it was also observed that for some continuous-alphabet channels the scaling could be superlinear, in fact \emph{linearithmic}: $\log N\sim R\,n\log n$ \cite{SPBD:DI_power,DI-poisson_mc,DI-fading}. This behaviour was later proven to be a universal characteristic of DI codes over general memoryless channels with continuous input and discrete outputs \cite{CDBW:DI_classical}, excitingly establishing a broad class of channels for which DI asymptotically outperforms any Shannon-style transmission scheme. This general work was subsequently complemented with a study of the behaviour of the DI rate in different error settings \cite{CDBW:Reliability_ICC,CDBW:Reliability-TCOM,DI-steins}.

DI over Gaussian channels falls outside the general framework of \cite{CDBW:DI_classical,CDBW:Reliability-TCOM} as they have continuous output. However, it has also been shown to exhibit a linearithmic scaling (actually it was the first channel for which this behaviour was observed in DI \cite{SPBD:DI_power,DI-fading}), with the corresponding linearithmic capacity bounded as $\frac38\leq\dot{C}_{DI}(\cG)\le\frac12$ (the lower bound is an improved new result from \cite{galaxy-codes}). All these works on Gaussian channels, however, contain only asymptotic results which tell us little about the behaviour of the errors beyond that they go to $0$. Still, note that by discretising the output of an infinite-alphabet channel, the achievability results from the discrete output case \cite{CDBW:DI_classical,CDBW:Reliability-TCOM} can be applied directly to yield typically linearithmic rate lower bounds.

In regular transmission of messages, it is known that when communicating at any rate below capacity, the error (incorrect decoding) vanishes exponentially fast with $n$. In contrast, it was shown in \cite{CDBW:Reliability-TCOM} that, for channels with finite output, if the identification errors vanish exponentially fast $\lambda_{1}\sim 2^{-nE_{1}}$, $\lambda_{2}\sim 2^{-nE_{2}}$, then we can no longer have linearithmic rate of DI, rather the rate is limited linearly, and proportional to $-\log\min\{E_1,E_2\}$. (In the converse, a similar result holds even when only one of the errors is exponentially small \cite{DI-steins}, and then the linear rate is upper-bounded proportional to $-\log\max\{E_1,E_2\}$.) In other words, we can only identify at linearithmic rate when the errors vanish more slowly than exponentially. In fact, we need $E_{1,2}(n) \gtrsim 1/n$. 
Does this curious rate-reliability tradeoff extend to 
infinite-output channels? Note that like in the regime of linearithmic rates, the achievability of linear rates with exponentially small errors can be obtained directly from discretising the output and applying the corresponding results of \cite{CDBW:Reliability-TCOM}. 

In the present paper, we initiate the rate-reliability study of DI over channels with infinite output, analysing the general linear Gaussian channel. We provide upper and lower bounds to the identification rates in terms of the errors which allow us observe and study different code size scaling regimes by fixing various error settings. Besides the strong theoretical motivation for such an analysis (evidenced in the previous paragraphs), there is also a practical one, rooted in the central role of Gaussian models as canonical abstractions for linear communication systems, sensor networks, and signal processing architectures, where rate-reliability functions are critical for performance assessment.


\medskip

\section{Channel model}
\label{sec:channel}
We will consider here the general \emph{linear Gaussian channel} $\cG:\RR^n\rightarrow \RR^n$, defined by the following input-output relation:
\begin{equation}\label{eq:input-output}
Y^n=A x^n+Z^n,
\end{equation}
where $x^n=x_1\dots x_n\in \RR^n$ is the input sequence (vector $x^n=\vec{x}$) fed into the channel, 
$A \in \RR^{n \times n}$ is a deterministic full-rank linear transformation ,
$Z^n \sim \cN(0, \Sigma)$ is an $n$-dimensional centred Gaussian-distributed noise vector with covariance matrix $\Sigma \in \RR^{n \times n}$, 
and $Y^n \in \RR^n$ is the channel output.
In other words, given an input sequence $x^n$, we have the following output probability distribution:
\begin{equation}
  \label{eq:outputPD}
  \cG_{x^n}:=\cG(Y^n|x^n)=\cN(Ax^n,\Sigma).
\end{equation}
As real communication systems cannot handle arbitrarily large signals, the following power constraint on the transmission power of each codeword is imposed:
\begin{equation}
  \label{eq:power_constraint}
  \|x^n\|^2\leq nP.
\end{equation}


To highlight the generality of the proposed linear Gaussian model, note that it subsumes many classical channels as special cases. The \emph{additive white Gaussian noise (AWGN) channel} appears for $A=\1_n$ and $\Sigma=\sigma^2\1_n$; the slow and fast fading models studied in \cite{DI-fading} correspond to $A=g\1_n$ and $A=\mathrm{diag}(g_1,\ldots,g_n)$, respectively. Allowing statistical dependence among the diagonal entries, we can model correlated fading; while choosing $A$ with Toeplitz structure captures inter-symbol interference (ISI). 
Non-square full-column-rank matrices $A \in \mathbb{R}^{m \times n}$ arise in MIMO and related settings involving projections, redundancy, or other dimension-changing effects. Such channels, however, offer no additional generality here, as any injective $A : \mathbb{R}^n \to \mathbb{R}^m$ is equivalent to an $n$-dimensional Gaussian channel obtained by projecting the output onto the $\mathrm{range}(A)$. For clarity we therefore restrict to the square, invertible case.


\medskip

\section{Preliminaries}
\label{sec:preliminaries}
Our method relies on studying metric properties of output probability sets conditioned on the input sequences. We include in this section the fundamental measures and relations that we will require.
Let us start simply by recalling the (squared) \emph{Euclidean distance} between two sequences $a^n,b^n\in\cX^n$:
\begin{equation}
    \|a^n-b^n\|^2 
      = {(a^n - b^n)^\top  (a^n - b^n)}
      = {\sum_{i=1}^n|a_i-b_i|^2}.
\end{equation}

The (squared) \emph{Mahalanobis distance}, corresponds to the Euclidean distance in a space transformed by a symmetric and positive semidefinite matrix $M$:
\begin{equation}
\label{eq:mahalanbobis}
    \|a^n-b^n\|_M^2 
      := (a^n - b^n)^\top M (a^n - b^n).
\end{equation}
As $M$ can be diagonalised (it is positive semidefinite), we can easily prove a relation with the Euclidean distance through its spectral decomposition $M=\sum_i\nu_i\vec{u_i}\vec{u_i}^\top$. Indeed,
\begin{equation}
  \label{eq:Mah-Euc}
  \nu_{\min} \|a^n-b^n\|^2
    \leq \|a^n-b^n\|_M^2
    \leq \nu_{\max} \|a^n-b^n\|^2.
\end{equation}
While the lower bound is trivial if $\nu_{\min}=0$, it is not for strictly positive definite matrices $M$ (i.e.~$\nu_{\min}>0$), which will be our case. 

We move now to statistical distance measures between two probability distributions with density functions $p$ and $q$ defined on $\RR^n$ (with respect to the Lebesgue meaaure). The \emph{total variation (TV) distance} is defined as
\begin{equation}
  \label{eq:TVD}
  \frac{1}{2} \|p-q\|_1 := \sup_{\cL \subseteq \RR^n} \big| p(\cL) - q(\cL) \big|
    = \frac{1}{2} \int_{\RR^n} |p(y) - q(y)| \, dy,
\end{equation}
where $p(\cL) = \int_{\cL} p(y) \, dy$ and $q(\cL) = \int_{\cL} q(y) \, dy$. Note that $\|\cdot\|_1$ is a norm.

The \emph{Bhattacharyya coefficient}, which we will call \emph{fidelity} in this paper, borrowing its name from quantum information, is defined as
\begin{equation}
  \label{eq:fidelitydef}
  F(p,q) := \int_{\RR^n} \sqrt{p(y) q(y)} \, dy.
\end{equation}
These last two satisfy the following relationship:
\begin{equation}\label{eq:FvdG}
    1 - F(p,q) \le \frac{1}{2} \|p-q\|_1 \le \sqrt{1 - F(p,q)^2},
\end{equation}
and it is known that $P(p,q) = \sqrt{1 - F(p,q)^2}$ is a metric on the space of probability distributions.


\section{Rate–Reliability Tradeoff Analysis}
\label{sec:results}
In this section, we aim to see the impact that a change on the reliability (the speed at which the errors go to zero) has to the rates of DI codes over general Gaussian channels. To this end, we write the errors in an exponential form $\lambda_{1,2}=e^{-nE_{1,2}(n)}$, where the subscript $\{\cdot\}_{1,2}$ means that the equation holds for both sub-indices separately. Our objective is to derive rate bounds that depend directly on the error exponents such that, by giving different values to the exponent (defining different reliabilities), we can observe how the rate changes. Notice also that the error exponents need to be at least of order $E_{1,2}(n)\geq\Omega(1/n)$ to keep the errors bounded $\lambda_{1,2}\ll 1$. 

We provide converse proofs in two different error settings. First, in Subsection \ref{ssec:converse}, we present a general case which is tightest for errors that are similar (symmetric). 
In \ref{ssec:asymmetric}, we study the maximally asymmetric error scenario (where errors are very dissimilar, in fact, only one of them is exponentially small). Finally, in Subsection \ref{ssec:achievability}, we discuss matching achievability rate results under different error conditions.

\subsection{Tradeoff in the symmetric error regime}
\label{ssec:converse}
We provide in this section an upper bound on the DI rate in terms of the minimum error exponent in Theorem \ref{thm:RR_upperbound}. The analysis of this bound yields critical rate results when the errors vanish exponentially fast or very slow.

\begin{theorem}
\label{thm:RR_upperbound}
For any DI code of block length $n$ over the general linear Gaussian channel described in section \ref{sec:channel}, and with positive error exponents, $E_1(n), E_2(n) \geq E(n) \geq \frac{\ln 16}{n}$, the linear rate $R(n)=\frac1n \log N$ is upper-bounded as
\[
  R(n) \leq \frac12\log\frac{8\nu_{\max}P}{E(n)},
\]
where $P$ is the transmission power constraint in Eq.~\eqref{eq:power_constraint}, $\nu_{\max}$ the maximum eigenvalue of the matrix $M=A^\top\Sigma^{-1}A$, and $A$ and $\Sigma$ are the linear transformation and the covariance that define the Gaussian channel (see Section \ref{sec:channel}).
\end{theorem}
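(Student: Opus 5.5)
The plan is a packing argument. The error conditions force the output distributions of distinct codewords to be far apart in total variation. For Gaussians, this becomes a lower bound on the Mahalanobis distance between the codewords. Via \eqref{eq:Mah-Euc}, that in turn gives a lower bound on their Euclidean distance. Finally, we count how many such separated points fit in the ball $\|x^n\|^2\le nP$.

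\emph{Step 1 (TV separation).} For $i\neq j$, take $\cL=\cD_i$ in \eqref{eq:TVD}:
\[
\tfrac12\|\cG_{u_i}-\cG_{u_j}\|_1\ge \cG_{u_i}(\cD_i)-\cG_{u_j}(\cD_i)\ge 1-\lambda_1-\lambda_2\ge 1-2e^{-nE}.
\]
By the upper bound in \eqref{eq:FvdG}, $F(\cG_{u_i},\cG_{u_j})^2\le 1-(1-2e^{-nE})^2\le 4e^{-nE}$.

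\emph{Step 2 (Gaussian fidelity).} Compute the Bhattacharyya coefficient of $\cN(Au_i,\Sigma)$ and $\cN(Au_j,\Sigma)$ by completing the square. With the same covariance this gives
\[
F=\exp\!\Big(-\tfrac18 (u_i-u_j)^\top A^\top\Sigma^{-1}A(u_i-u_j)\Big)=e^{-\frac18\|u_i-u_j\|_M^2}.
\]
Hence $\tfrac14\|u_i-u_j\|_M^2\ge nE-\ln 4$. The hypothesis $nE\ge\ln 16$ gives $\ln 4\le nE/2$, so $\|u_i-u_j\|_M^2\ge 2nE$. By \eqref{eq:Mah-Euc} we then get $\|u_i-u_j\|^2\ge 2nE/\nu_{\max}$. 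Thus the codewords are separated by $d=\sqrt{2nE/\nu_{\max}}$; in particular, distinct messages have distinct codewords, so $N$ counts points.

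\emph{Step 3 (packing).} The balls of radius $d/2$ around the codewords are disjoint. They are all contained in the ball of radius $\sqrt{nP}+d/2$. Comparing volumes gives
\[
N\le\Big(1+\tfrac{2\sqrt{nP}}{d}\Big)^n=\Big(1+\sqrt{\tfrac{2\nu_{\max}P}{E}}\Big)^n.
\]
To reach the stated constant, bound $1+\sqrt{2\nu_{\max}P/E}\le 2\sqrt{2\nu_{\max}P/E}=\sqrt{8\nu_{\max}P/E}$. This step requires $2\nu_{\max}P/E\ge1$. Taking $\frac1n\log$ then yields $R(n)\le\frac12\log\frac{8\nu_{\max}P}{E}$.

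\emph{Main obstacle.} The delicate point is the regime where $E$ is large, i.e.\ $2\nu_{\max}P/E<1$, where the last inequality fails. There, Step 2 gives separation $d>2\sqrt{nP}$, which is larger than the diameter of the power ball. So at most one codeword fits, $N\le1$, and $R=0$ is trivially below the bound whenever the bound is nonnegative. If $8\nu_{\max}P<E$, the stated bound is negative; I would either note that $N=1$ is then forced, so the claim is understood as $R\le\max\{0,\cdot\}$, or tighten the packing constants.

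The remaining work is routine bookkeeping: checking the constant $\ln 16$ in Step 2, and the choice of $\log$ base, which is consistent as long as the same base is used for $N$.
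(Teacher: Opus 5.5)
Your proposal is correct and follows the paper's proof essentially step for step: TV separation from the test $\cD_i$, the bound $F^2<4e^{-nE}$ via \eqref{eq:FvdG}, the closed-form Gaussian fidelity $e^{-\|u_i-u_j\|_M^2/8}$, the passage to Euclidean distance through $\nu_{\max}$, and the volumetric packing. The only cosmetic difference is that the paper keeps the radius $r=\sqrt{(nE-2\ln 2)/\nu_{\max}}$ through the packing and absorbs the $2\ln 2$ term only at the end. Your edge-case remark is also sound and sharper than the paper: its step $(\sqrt{nP}+r)^n/r^n\le(2\sqrt{nP}/r)^n$ tacitly needs $r\le\sqrt{nP}$, which is automatic once $N\ge 2$. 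For $E>8\nu_{\max}P$, the trivial one-word code with $\cD_1=\RR^n$ shows the bound can only hold in the form $R(n)\le\max\{0,\tfrac12\log\tfrac{8\nu_{\max}P}{E(n)}\}$ (or under $N\ge 2$). So tightening the packing constants, your second option, would not rescue the literal statement.
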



\begin{proof}
By construction, any good DI code has pairwise total variation distance between two output distributions corresponding to different codewords lower bounded. Indeed, given an optimal decoding test $\cD_i\subset\RR^n$ for the codeword $u_i$ it is required by Definition \ref{def:DI_code} that $W_{u_i}(\cD_i)\geq 1-\lambda_1$ and $W_{u_{j}}(\cD_i)\le \lambda_2$ for $i\neq j$. Therefore, by the supremum definition of the TV distance [see Eq.~\eqref{eq:TVD}], it is clear that 
\begin{equation*}
  \label{eq:converseTVD}
  \frac12\|\cG_{u_i}-\cG_{u_{j}}\|_1
   \geq \frac12\|\cG_{u_i}(\cD_i)-\cG_{u_{j}}(\cD_i)\|_1
  \geq 1-\lambda_1-\lambda_2.
\end{equation*}

Let us now use the relations in Eq.~\eqref{eq:FvdG} to express the bound above in terms of the fidelity. We obtain:
\begin{equation}
  \label{eq:maxFidelity}
  F(\cG_{u_i},\cG_{u_{j}})^2
    \leq 1-\left(1-2e^{-nE(n)}\right)^2 
    < 4e^{-nE(n)}.
\end{equation}
Let us remark here that the above bound is tight only when the error exponents $E_1(n),\,E_2(n)$ are of similar order. If they are very different, a case studied in Subsection \ref{ssec:asymmetric}, it is loose. 

We have been motivated to move from total variation distance to fidelity in Eq.~\eqref{eq:maxFidelity} because the fidelity integral in Eq.~\eqref{eq:fidelitydef} for two displaced Gaussian distributions with the same covariance has a simple closed-form expression \cite{LieseVajda}. Indeed, let
\(
  p(\vec{y})=\mathcal N(\vec{y};A\vec{x},\Sigma)
\)
and
\(
  q(\vec{y})=\mathcal N(\vec{y};A\vec{x'},\Sigma),
\)
where recall that we are using a vector form for our sequences $x^n:=\vec{x}$ to make the notation more compact. Explicitly,
\begin{equation}
  \label{eq:MultiModalNormal}
  p(\vec{y})
    =\frac{\exp\!\left(-\frac12 (\vec{y}-A\vec{x})^\top\Sigma^{-1}(\vec{y}-A\vec{x})\right)}{(2\pi)^{n/2}(\det\Sigma)^{1/2}},
\end{equation}
with an analogous expression for $q(\vec{y})$. Therefore,
\begin{equation*}
  \label{eq:integrant}
  \sqrt{p(\vec{y})q(\vec{y})}
  \!=\! \frac{e^{-\frac14\big[(\vec{y}-A\vec{x})^\top\Sigma^{-1}(\vec{y}-A\vec{x})+(\vec{y}-A\vec{x'})^\top\Sigma^{-1}(\vec{y}-A\vec{x'})\big]}}{(2\pi)^{n/2}(\det\Sigma)^{1/2}}.
\end{equation*}
Defining $\vec{\mu}=(A\vec{x}+A\vec{x'})/2$ and $\vec{\Delta}=A\vec{x}-A\vec{x'}$, the exponent in the expression above becomes
\begin{equation}\label{eq:integrand_exponent}
-\frac12(\vec{y}-\vec{\mu})^\top\Sigma^{-1}(\vec{y}-\vec{\mu})-\frac18\,\vec{\Delta}^\top\Sigma^{-1}\vec{\Delta}.    
\end{equation}
Therefore, we can write the fidelity as
\begin{equation*}
\label{eq:fidelitycalc}
\begin{split}
  F(p,q) 
   :=& \int_{\RR^n}\sqrt{p(\vec{y})q(\vec{y})}\,d\vec{y}\\
    =& \exp\left(-\frac{\vec{\Delta}^\top\Sigma^{-1}\vec{\Delta}}{8}\right)\underbrace{\int_{\RR^n}\frac{e^{-\frac12 (\vec{y}-\vec{\mu})^\top\Sigma^{-1}(\vec{y}-\vec{\mu})}} {(2\pi)^{n/2}(\det\Sigma)^{1/2}}d\vec{y}}_{=1},
\end{split}
\end{equation*}
where the integral evaluates to $1$, because it is the volume of the normalised Gaussian $\cN(\vec{y};\vec{\mu},\Sigma)$. So, we have:
\begin{equation*}
\label{eq:fidelity_step}
\begin{split}
    F(p,q)&=\exp\!\left(-\frac{1}{8}(A\vec{x}-A\vec{x'})^\top\Sigma^{-1}(A\vec{x}-A\vec{x'})\right)\\
    &=\exp\!\left(-\frac{1}{8}(\vec{x}-\vec{x'})^\top(A^\top\Sigma^{-1}A)(\vec{x}-\vec{x'})\right)\\
    &=\exp\!\left(-\frac{1}{8}\|\vec{x}-\vec{x'}\|^2_{M}\right),
\end{split}
\end{equation*}
with $M:=A^\top\Sigma^{-1}A$ a symmetric and positive definite matrix (since $\Sigma$ is positive definite and $A$ has full rank). We use now the rightmost relation in Eq.~\eqref{eq:Mah-Euc} to bound the fidelity in terms of the Euclidean distance:
\begin{equation}
\begin{split}\label{eq:euc_F_bound}
-\frac{\nu_{\max}\|\vec{x}-\vec{x'}\|^2}{8} \leq -\frac{\|\vec{x}-\vec{x'}\|^2_{M}}{8}&=\ln F(p,q)\\
&<\ln2-\frac{nE(n)}{2},
\end{split}
\end{equation}
where in the last inequality we have included the upper bound on the fidelity that any two codewords have to respect, established in Eq.~\eqref{eq:maxFidelity}. Rearranging Eq.~\eqref{eq:euc_F_bound} above, we can obtain a minimum Euclidean distance at which any two codewords of a good DI code have to be. Indeed,
\begin{equation}
  \label{eq:packing_radius}
  \|\vec{x}-\vec{x'}\| > 2\sqrt{\frac{nE(n)-2\ln2}{\nu_{\max}}} =: 2r.
\end{equation} 
Here we invoke the lower bound on $E(n) \geq \frac{\ln 16}{n}$ to ensure that $r > 0$. 
We have found that the codewords of a good DI code have to form a packing of radius $r$. That is, around all potential codewords $u_i$ we can define an $n$-dimensional ball $\cS_{i}(n,r)$ of radius $r$ such that they do not pairwise intersect. In other words, the ball $\cS_{i}(n,r)$ corresponding to the codeword $u_i$ has to be disjoint from $\cS_{j}(n,r)$, the ball corresponding to any different codeword ($j\neq i$). Now, as all input codewords have to fulfil the power constraint $\|x^n\|^2\le nP$, the centres of the spheres (the codewords) have to be contained in the permissible set that the constraint defines. In our case, the power constraint defines an $n$-dimensional ball $\cS_0(n,\sqrt{nP})$ of radius $\sqrt{nP}$. Therefore, our problem is reduced to a geometric one: how many disjoint $\cS_{i}(n,r)$ can we pack with centres inside $\cS_0(n,\sqrt{nP})$? This will be an upper bound to the maximum number $N$ of different valid codewords.

\begin{figure}[ht]
    \centering
    \includegraphics[width=0.9\linewidth]{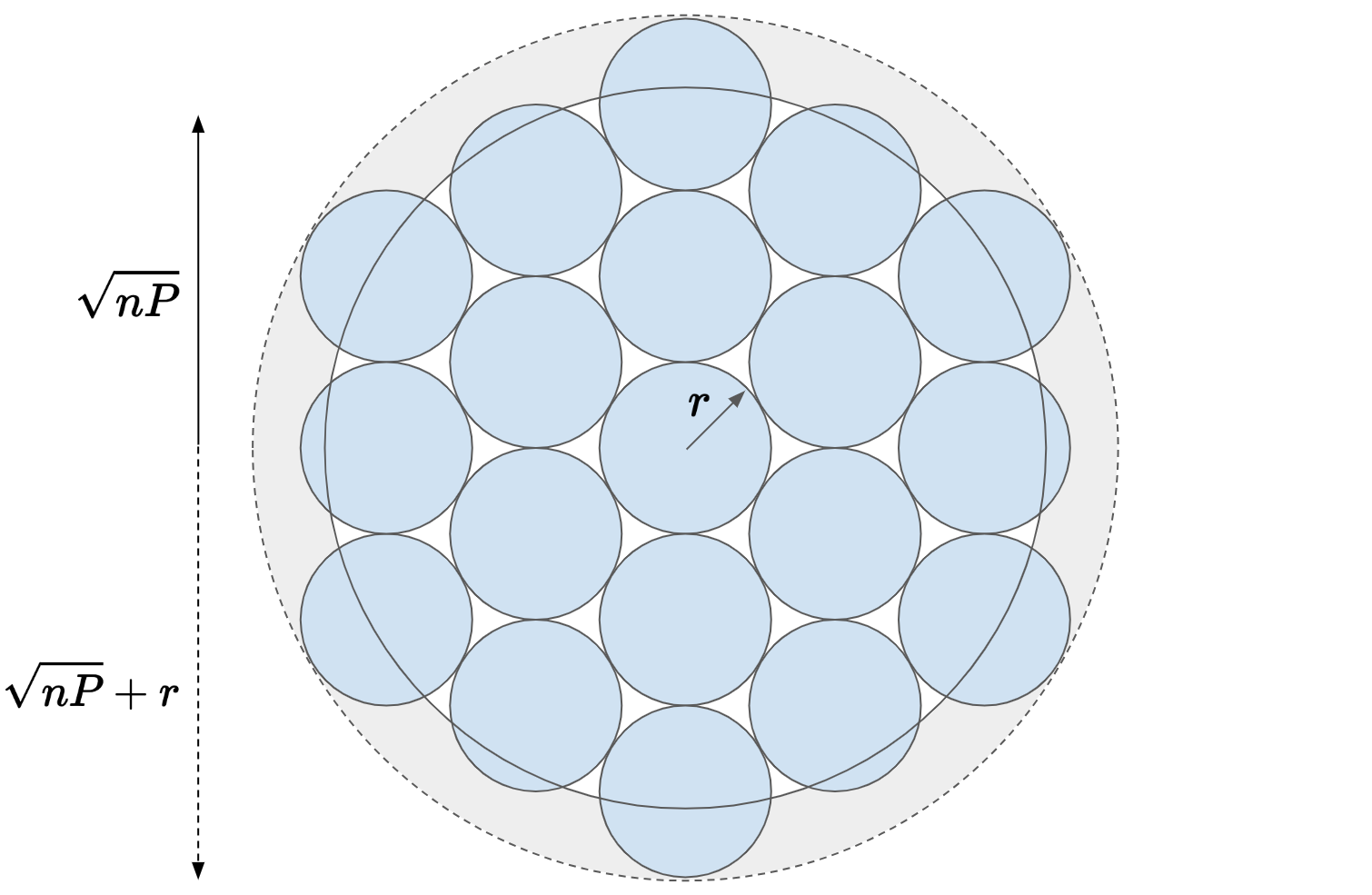}
    \caption{A packing of $N$ small spheres $\cS_i(n,r)$ centred inside the bigger $S_0(n,\sqrt{nP})$, and all contained by the biggest $S_+(n,\sqrt{nP}+r)$.}
    \label{fig:packing}
\end{figure}

The volume of a ball can be calculated analytically:
\begin{equation}
  \label{eq:volume_hypersphere}
  \textsc{Vol}[\cS(n,\rho)]
    = \frac{\pi^{n/2}}{\Gamma(\frac{n}{2}+1)}\rho^n,
\end{equation}
so we are motivated to use volumetric arguments to upper bound $N$. We can obtain such a bound by computing the ratio between the volume of a hypersphere $S_+(n,\sqrt{nP}+r)$ that contains all the balls $\cS_i(n,r)$ and the volume of one of these small $\cS_i(n,r)$, see Figure \ref{fig:packing}. We obtain:
\begin{equation*}
  \label{eq:N_upperbound}
  N \leq \frac{\textsc{Vol}[S_+(n,\sqrt{nP}+r)]}{\textsc{Vol}[S_i(n,r)]}
    = \frac{(\sqrt{nP}+r)^n}{r^n}\leq\left(\!\frac{2\sqrt{nP}}{r}\!\right)^n\!\!.
\end{equation*}
By introducing the definition of the linear rate $R(n)=\frac1n \log N$ and substituting the value of $r$ defined in Eq.~\eqref{eq:packing_radius} we get:
\begin{equation}
\label{eq:RR_upperbound}
\begin{split}
  R(n)
   &\leq\log\left(2\sqrt{nP}\right) - \log\sqrt{\frac{nE(n)-2\ln(2)}{\nu_{\max}}}\\
   &\leq\frac12\left[\log n -\log\left(nE(n)\right)+\log\left(8\nu_{\max}P\right)\right]\\
   &= \frac12\log\frac{8\nu_{\max}P}{E(n)},
\end{split}
\end{equation}
where in the inequality we have used that $nE(n)$ is at least twice as large as $2\ln 2$.
\end{proof}

Studying Theorem \ref{thm:RR_upperbound} 
we can easily conclude that the typical linearithmic behaviour of DI-codes over Gaussian channels is lost when the errors vanish exponentially fast, similarly to what we observed for channels with finite output \cite{CDBW:Reliability-TCOM}. Indeed, notice that imposing exponentially vanishing errors means fixing $E(n)=E>0$ constant. Then, directly from Eq.~\eqref{eq:RR_upperbound} we get:
\begin{equation}
  \label{eq:lost_linearithmicality}
  R(n) \leq \frac12\log\frac{2\nu_{\max}P}{E} = O(1),
\end{equation}
so, by inverting the definition of the rate, we get that the size of the $N$ identifiable messages $\log N=nR(n)\le O(n)$ can only increase linearly, completing the proof.

In contrast, when taking very slowly vanishing errors, with $E_{1,2}(n) \geq E(n) = \Omega(1/n)$, we can recover the best known upper bound on the DI capacity of the Gaussian Channel. Indeed, by introducing $E(n)=\Omega(1/n)$ in Eq.~\eqref{eq:RR_upperbound}, we find $R(n)\leq \frac12 \log n + O(1)$, recovering the linearithmic behaviour. Finally, inserting this rate into the definition of the linearithmic capacity:
\begin{equation}\label{eq:capacity}
\!\!\dot{C}_\text{DI}(\cG) 
    =\!\! \sup_{\myfrac{E_i(n)\rightarrow 0,}{nE_i(n)\rightarrow\infty}} \!\!
      \left( \liminf_{n\rightarrow\infty} \frac{1}{\log n} \max_{\myfrac{\text{errors }\lambda_i\text{ s.t.}}{\log \lambda_i \leq -nE_i(n)}}\!\! R(n) \right),
\end{equation}
we obtain
\begin{equation}
\dot{C}_\text{DI}(\cG)\leq\frac12+\liminf_{n\to\infty}\frac{\log(2\nu_{\max}P)}{2\log n}=\frac12,
\end{equation}
which is (if the reader recalls the Introduction) the best known capacity upper bound for the Gaussian channel.


In the analysis so far, only the smaller of the two error exponents, $E(n)=\min\{E_1(n),E_2(n)\}$, really plays a role. 
It is thus a priori possible that, to obtain linearithmic results, one of the two errors can vanish exponentially fast $E_i(n)=\Theta(1)$, as long as the other is $E_{j\ne i}(n)=\Omega(1/n)$ (with $i,j=\{1,2\}$). As we have already noted in the proof of Theorem \ref{thm:RR_upperbound}-- recall the discussion after Eq.~\eqref{eq:maxFidelity}-- the above method is tighter when the error exponents are similar, and looser when they are very distinct, and not applicable when the smaller exponent is $0$.
In the following subsection, we develop a different method to study maximally asymmetric error settings, which will give us the conditions that both error exponents (and not just the minimum one) have to fulfil to find linearithmic codes.

\medskip
\subsection{Tradeoff in asymmetric error regimes}
\label{ssec:asymmetric}
We study the following \emph{maximally asymmetric} settings:
\begin{itemize}
    \item \textbf{Stein regime:} $\lambda_1<1$ and $\lambda_2=2^{-nE_2}$,
    \item \textbf{Sanov regime:} $\lambda_1=2^{-nE_1}$ and $\lambda_2<1$.
\end{itemize} 
These two regimes mirror two well-known classical asymmetric hypothesis testing settings. So we will borrow tools from that context, drawing inspiration from \cite{DI-steins}. 

We present an upper bound for the regimes above in Theorem \ref{thm:stein_upperbound} and analyse its implications in the subsequent discussion.

\begin{theorem}
\label{thm:stein_upperbound}
For any DI code of block length $n$ over the general linear Gaussian channel described in section \ref{sec:channel}, in the Sanov (take $E_1$ below) or the Stein (take $E_2$) regimes, the linear rate is upper-bounded as follows:
\[
  R(n) \leq \log\left(\sqrt{\frac{2\nu_{\max}P}{E_{1,2}(n)}}+1\right),
\]
where $P$ is the transmission power constraint, $\nu_{\max}$ the maximum eigenvalue of the matrix $M=A^\top\Sigma^{-1}A$, with $A$ the linear transformation and $\Sigma$ the covariance that define the Gaussian channel.
\end{theorem}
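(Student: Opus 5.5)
The approach mirrors the proof of Theorem \ref{thm:RR_upperbound}: from the error conditions we extract a lower bound on the pairwise Euclidean distance between codewords, and then finish with a volumetric packing argument. The difference lies in how the separation is obtained. Fidelity only controls $1-\lambda_1-\lambda_2$, so it cannot see a single exponentially small error. Instead, following \cite{DI-steins}, we use a hypothesis-testing (relative-entropy) argument.

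\textbf{Stein regime ($\lambda_2=2^{-nE_2}$).} Fix codewords $u_i\neq u_j$. The set $\cD_i$ is a test between $\cG_{u_i}$ and $\cG_{u_j}$ with $\cG_{u_i}(\cD_i)\ge 1-\lambda_1$ and $\cG_{u_j}(\cD_i)\le 2^{-nE_2}$. Applying data processing for the relative entropy to the binary test $\{\cD_i,\cD_i^c\}$ gives
\[
D(\cG_{u_i}\|\cG_{u_j}) \ge (1-\lambda_1)\,nE_2 - h(\lambda_1),
\]
where $h$ is the binary entropy. For two Gaussians with equal covariance the divergence has the closed form
\[
D(\cG_{u_i}\|\cG_{u_j})=\tfrac12\|u_i-u_j\|_M^2\le \tfrac{\nu_{\max}}{2}\|u_i-u_j\|^2 .
\]
The inequality follows from Eq.~\eqref{eq:Mah-Euc}, and the units are to be matched to the logarithm base. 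Combining the two displays yields $\|u_i-u_j\|^2\gtrsim 2nE_2/\nu_{\max}$, up to the factor $(1-\lambda_1)$ and an additive constant.

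\textbf{Sanov regime ($\lambda_1=2^{-nE_1}$).} By symmetry, we use the test $\cD_i^c$ with the roles of the two distributions swapped. Now $\cG_{u_j}(\cD_i^c)\ge 1-\lambda_2$ and $\cG_{u_i}(\cD_i^c)\le 2^{-nE_1}$, so $D(\cG_{u_j}\|\cG_{u_i})\gtrsim nE_1$. Since the Gaussian divergence is symmetric in this equal-covariance case, we obtain the same kind of separation with $E_1$ in place of $E_2$.

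\textbf{Packing step.} To reach the exact constant in the statement, I would aim for the cleaner separation $\|u_i-u_j\|\ge d:=\sqrt{2nE_{1,2}/\nu_{\max}}$. Placing disjoint balls of radius $d/2$ around the codewords, all of them inside $\cS_+(n,\sqrt{nP}+d/2)$, gives
\[
N\le\Big(\tfrac{2\sqrt{nP}}{d}+1\Big)^n .
\]
This is $\big(\sqrt{2\nu_{\max}P/E}+1\big)^n$, and taking $\frac1n\log$ produces exactly the claimed bound. Note that the $+1$ is kept here rather than being absorbed as in Theorem \ref{thm:RR_upperbound}.

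\textbf{Main obstacle.} The delicate step is getting the clean separation $\|u_i-u_j\|^2\ge 2nE/\nu_{\max}$ without the $(1-\lambda_1)$ factor and the $h(\lambda_1)$ correction, since only $\lambda_1<1$ is assumed. My first attempt would be an exact Neyman–Pearson computation in the Gaussian case. Project onto the direction $M(u_i-u_j)$; the optimal test is then a half-space, and the type-II error is $\Phi\!\big(\Phi^{-1}(\lambda_1)-\|u_i-u_j\|_M\big)$. I would then use the Gaussian tail bound $\Phi(-t)\ge$ (something like) $e^{-t^2/2}$, up to subexponential factors, to conclude $\|u_i-u_j\|_M^2\gtrsim 2nE$. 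If the constants come out slightly off, I would accept a bound with $E$ replaced by $E-o(1)$, or adjust the radius, while preserving the structure of the result.
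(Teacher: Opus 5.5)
Your skeleton is the paper's: a pairwise hypothesis-testing constraint from the decoding sets, the closed-form divergence of equal-covariance Gaussians, $\|\cdot\|_M^2\le\nu_{\max}\|\cdot\|^2$, and the volumetric count keeping the $+1$. The gap is exactly the step you flag. Data processing for the KL divergence is a weak-converse tool. The factor $(1-\lambda_1)$ in front of $nE_2$ is intrinsic to the binary divergence $d(1-\lambda_1\|\lambda_2)$. So this route gives a valid but strictly weaker bound, and cannot reach separation $\sqrt{2nE_2/\nu_{\max}}$ unless $\lambda_1\to0$.

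The paper instead uses, for $\alpha>1$, the bound $D_h^\epsilon(P\|Q)\le D_\alpha(P\|Q)+\frac{\alpha}{\alpha-1}\log\frac{1}{1-\epsilon}$, together with $D_\alpha=\frac{\alpha}{2}\|x-x'\|_M^2$ and an optimisation over $\alpha$. This makes the dependence on the non-exponential error additive rather than multiplicative. With the sign written here (the correct one), however, it gives $\|x-x'\|_M^2\ge\sup_{\alpha>1}\big(\frac{2nE}{\alpha}-\frac{2L}{\alpha-1}\big)=\big(\sqrt{2nE}-\sqrt{2L}\big)^2$. Here $L=\ln\frac{1}{1-\lambda}$, with $\lambda=\lambda_1$ in the Stein regime and $\lambda_2$ in the Sanov regime. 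That is a separation strictly below $\sqrt{2nE}$. The paper's proof writes the coefficient as $\frac{\alpha}{1-\alpha}$, which reverses the sign of the correction; that is how it obtains the clean $\|x-x'\|^2>2nE/\nu_{\max}$.

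Your Neyman--Pearson plan is the sharpest tool here, and it shows the obstacle cannot be removed. First fix the sign: with $\delta=\|u_i-u_j\|_M$, the optimal type-II error is $\Phi(\Phi^{-1}(1-\lambda_1)-\delta)$, not $\Phi(\Phi^{-1}(\lambda_1)-\delta)$. The pairwise constraint is then exactly $\delta\ge\Phi^{-1}(1-\lambda_1)+\Phi^{-1}(1-\lambda_2)$. This is symmetric in the two errors, which also justifies your Sanov-by-symmetry step.

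Since $1-\Phi(\sqrt{2nE})\le\frac12e^{-nE}$, we have $\Phi^{-1}(1-e^{-nE})<\sqrt{2nE}$. Hence for every $\lambda_1\ge\frac12$ the clean separation is \emph{not} implied by any pairwise argument. It is implied, for large $nE$, when $\lambda_1<\frac12$ is fixed. At $\lambda_1=1-\lambda_2$ no separation is needed at all. Take $\cD_i=Au_i+B$, with $B$ a centred ball of $\cN(0,\Sigma)$-mass $\lambda_2$. Anderson's inequality gives $\cG_{u_j}(\cD_i)\le\lambda_2$ for all $j\neq i$, so any finite codebook is a valid code. The statement therefore fails under the bare hypothesis $\lambda_1<1$.

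What is provable, for $\lambda$ bounded away from $1$, is the stated bound with $E$ replaced by $\big(\sqrt{E}-\sqrt{L/n}\big)^2$, or by the sharper Neyman--Pearson quantity. This agrees with the claim up to lower-order terms when $nE\to\infty$. All of this is in natural units; with the $2^{-nE}$ convention an extra $\ln2$ appears, as you suspected. Your fallback of accepting $E-o(1)$ is therefore the right conclusion, and it is all the subsequent linear-versus-linearithmic discussion needs.
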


Note that compared to Theorem \ref{thm:RR_upperbound}, the upper bound is slightly worse because of the $+1$ term, but it is much better in the sense that there, the minimum of the exponents $E_1(n),\,E_2(n)$ figures in the denominator, whereas here we may choose the maximum.

\begin{proof}
Let us start with the quantity that is usually employed to characterise the tradeoff between the errors in finite block length asymmetric hypothesis testing settings: the \emph{hypothesis testing relative entropy}
\begin{equation}\label{eq:HTREntopy}
  D_h^\epsilon(P\|Q) 
    := -\log\inf_{\cL\subset\cY \text{ s.t.}\atop P(\cL^c)\leq\epsilon} Q(\cL),
\end{equation}
where $\cL$ is a decision region in the output $\cY=\RR^n$, and $\cL^c = \RR^n\setminus\cL$ its set complement. 

Now, take any good DI code in the Stein regime. The identification test $\cE_j$ satisfies $\cG_{u_j}(\cE_j^c) \leq \lambda_1$ and $\cG_{u_k}(\cE_j) \leq \lambda_2$ (for $k\neq j$). Hence, we get immediately, by definition,
\begin{equation*}
  D_h^{\lambda_1}(\cG_{u_j}\|\cG_{u_k}) \geq 
  -\log\lambda_2 = nE_2.
\end{equation*}
Similarly, by taking $\cE_k$ as the test, we get $D_h^{\lambda_2}(W_{u_j}\|W_{u_k}) \geq nE_1$ for a DI code in the Sanov regime. In other words, all good DI codes in the Stein or Sanov regimes have output distributions with a minimum pairwise hypothesis testing relative entropy linear in $n$. This property has a similar flavour to the main argument we used in the symmetric error setting, and indeed, we aim now to somehow translate this object into a bound on the minimum Euclidean distance between two codewords.

We start by transforming the entropies above into R\'enyi relative entropies, which are defined, given a parameter $\alpha\in(0,\infty)\setminus\{1\}$ to be fixed later, as:
\begin{equation}\label{eq:Renyi_def}
D_\alpha(P\|Q)=\frac{1}{\alpha-1}\log\int_{\cY}P(y)^\alpha Q(y)^{1-\alpha}dy.
\end{equation}
This is connected to the hypothesis testing relative entropy, for $\alpha>1$, through the following bound \cite{CMW16:Dh-to-D_alpha}:
\begin{equation*}
\!\!D_h^{\epsilon}(W_{u_j}\|W_{u_k})\!\leq\! D_\alpha (W_{u_j}\|W_{u_k})+\frac{\alpha}{1-\alpha}\log\left(\!\frac{1}{1-\epsilon}\!\right).
\end{equation*}
Therefore, for both the Stein and Sanov regimes, we can bound
\begin{equation}
  \label{eq:Renyi_converse_bound}
  D_\alpha(\cG_{x^n}\|\cG_{x'^n})
    \geq nE_{1,2}-\frac{\alpha}{1-\alpha} \log\left(\frac{1}{1-\lambda_{2,1}}\right).
\end{equation}

The R\'enyi divergence above has a closed-form expression, generalising the fidelity \cite{LieseVajda}: indeed, with $\cG_{\vec{x}}(\vec{y})=\cN(y;A\vec{x},\Sigma)$ and $\cG_{\vec{x'}}(\vec{y})=\cN(y;A\vec{x'},\Sigma)$ [see Eq.~\eqref{eq:MultiModalNormal} for the density] we have that the integrand of the R\'enyi entropy in Eq.~\eqref{eq:Renyi_def}, $\cG_{\vec{x}}(\vec{y})^\alpha \cG_{\vec{x'}}(\vec{y})^{1-\alpha}$, equals:
\begin{equation*}
  \frac{e^{-\frac{\alpha}{2} (\vec{y}-A\vec{x})^\top\Sigma^{-1}(\vec{y}-A\vec{x})+\frac{\alpha-1}{2}(\vec{y}-A\vec{x'})^\top\Sigma^{-1}(\vec{y}-A\vec{x'})}}{(2\pi)^{n/2}(\det\Sigma)^{1/2}}.
\end{equation*}
Similarly to what we did in Eq.~\eqref{eq:integrand_exponent} we can define $\vec{\mu}=\alpha A\vec{x}+(1-\alpha)A\vec{x'}$ and $\vec{\Delta}=A\vec{x}-A\vec{x'}$ and the exponent above becomes:
\begin{equation}
  \label{eq:integrand_exponent2}
  -\frac12(\vec{y}-\vec{\mu})^\top\Sigma^{-1}(\vec{y}-\vec{\mu})-\frac{\alpha(1-\alpha)}{2}\|\vec{x}-\vec{x'}\|_M^2.
\end{equation}
Now, the integral corresponding to the first term of the expression above exactly cancels the prefactor $(2\pi)^{-n/2}(\det\Sigma)^{-1/2}$ as it corresponds to a Gaussian centred at $\vec{\mu}$ with covariance $\Sigma$, therefore the integral over $\RR^n$ of Eq.~\eqref{eq:integrand_exponent2} is exactly $\exp\left(-\frac{\alpha(1-\alpha)}{2}\|\vec{x}-\vec{x'}\|_M^2\right)$. Substituting into Eq.~\eqref{eq:Renyi_def} we get:
\begin{equation*}
  D_\alpha(P\|Q)=\frac{1}{\alpha-1}\frac{\alpha(\alpha-1)}{2}\|\vec{x}-\vec{x'}\|_M^2=\frac{\alpha}{2}\|\vec{x}-\vec{x'}\|_M^2.
\end{equation*}
Putting together the result above, Eq.~\eqref{eq:Renyi_converse_bound}, and the upper bound on Eq.~\eqref{eq:Mah-Euc} we find:
\begin{equation}
  \label{eq:stein_radius}
  \|\vec{x}-\vec{x'}\|^2\geq\frac{2}{\nu_{\max}}\left[\frac{nE_{1,2}}{\alpha}-\frac{1}{1-\alpha}\log\left(\frac{1}{1-\lambda_{2,1}}\right)\right].
\end{equation}
We can now optimise this minimum distance over the parameter $\alpha\in(1,\infty)$. The function $f(\alpha)=(\frac{C}{\alpha}+\frac{K}{\alpha-1})^{1/2}$ with $C,K>0$ two positive constants is convex on alpha with the minimum on $f_{\min}=\sqrt{C}+\sqrt{K}$. Therefore, by choosing $C:=\frac{2nE_{1,2}}{\nu_{\max}}$ and $K:=\frac{2}{\nu_{\max}}\log\left(\frac{1}{1-\lambda_{2,1}}\right)$, we finally find a lower bound on the Euclidean distance between any two codewords of a good DI code in the Stein or Sanov regimes:
\begin{equation}
\begin{split}
  \|\vec{x}-\vec{x'}\|
    &\geq \sqrt{\frac{2}{\nu_{\max}}}\left[nE_{1,2}+\log\left(\frac{1}{1-\lambda_{2,1}}\right)\right]^{\frac12}\\
    &>    \sqrt{\frac{2nE_{1,2}}{\nu_{\max}}} =: 2r.
\end{split}
\end{equation}

Similarly to the symmetric error case, we have obtained that the codewords of any good DI code in the Stein (take $E_2$ and $\lambda_1$ in the equation above) or Sanov (take $E_1$ and $\lambda_2$) regimes have to form a packing of radius $r$. The argument becomes now the same as in the symmetric error case: we will reuse the volumetric tools (recall Fig.~\ref{fig:packing}) to upper-bound the size $N$ of the packing, obtaining
\begin{equation}
  N\leq\left(\frac{\sqrt{nP}+r}{r}\right)^n
    =\left(\sqrt{\frac{2\nu_{\max}P}{E(n)}}+1\right)^n.
\end{equation}
Therefore, the linear rate $R(n)=\frac1n \log N$ satisfies
\begin{equation}
  \label{eq:stein_RR}
  R(n) \leq \log\left(\sqrt{\frac{2\nu_{\max}P}{E_{1,2}(n)}}+1\right),
\end{equation}
and the proof is complete.
\end{proof}

Studying Theorem \ref{thm:stein_upperbound} similarly to how we studied Theorem \ref{thm:RR_upperbound} in the previous section we obtain the following conclusions:
first, the typical linearithmic rate behaviour of DI-codes over the Gaussian channel is also lost when only one of the errors is exponentially small, and replaced by linear rate scaling. Indeed, for either one $E_{1,2}(n)=E>0$ a positive constant, the linear rate in Eq.~\eqref{eq:stein_RR} is upper-bounded by $\frac12\log\frac{1}{E} + O(1)$. Second, and in contrast, when both error exponents $E_{1,2}(n) = \Omega(1/n)$, we recover the linearithmic behaviour of the rate.

\medskip
\subsection{Achievability of linear rate and exponentially small errors}
\label{ssec:achievability}
So far in this paper, we have only discussed converse results of the rate-reliability functions for DI over general linear Gaussian channels. We discuss in this section, for completeness, the achievability part. Indeed, a lower bound is needed to demonstrate that the upper bounds above and the conclusions we extracted are actually tight at leading order. 

We should demonstrate two things. First, that we can construct linear scale codes with exponentially small errors. Indeed, we found an $O(1)$ upper bound on the linear rate when errors vanish exponentially fast, but for the existence of such a code we need an $\Omega(1)$ lower bound. And secondly, that we can construct linearithmic scale codes whose errors vanish slowly enough.

The first task above can be shown immediately from previous results. We just need to take a linear rate transmission code for a Gaussian channel, which are well-studied \cite{CT_book,Gallager:Gauss_chapter}. Then notice first that when transmitting below capacity, the error of a transmission code may vanish exponentially fast, $\lambda=2^{-nE_T}$, and secondly that any transmission code is automatically a DI code. Indeed, transmission codes are deterministic (as a matter of fact, it is well known that randomness does not help in transmission of messages), and if we can decode a message, we can of course identify it with probability of missed and false identification bounded by $\lambda_1, \lambda_2 \leq \lambda = 2^{-nE_T}$. 

The second task is also partially solved, as the achievability in the linearithmic scenario was proven in \cite{SPBD:DI_power,DI-fading} for restricted Gaussian channels (AWGN and fading channels). Indeed, linearithmic codes are constructed in these references with polynomially vanishing error exponents $E_1(n), E_2(n) = \omega(1/n)$. However, the general case of DI over general linear Gaussian channels has not been proven. 

We provide in this section a code construction which will show not only the existence of linear scale codes for DI over general linear Gaussian channels, but also the trade-off between the error exponents and the linear rate, which turns out to go to $\infty$ with the vanishing of the former, cf.~\cite{CDBW:Reliability-TCOM}, indicating its faster-than-linear scaling nature (linearithmic, in fact). Indeed, we will study the conditions the errors have to fulfil to achieve linearithmic code sizes, and how this scaling is lost for exponentially vanishing errors, matching the converse results developed in the previous subsections. 

The code will be constructed in a very classical manner: we pick the code words as elements of a packing (with a minimum pairwise distance) and then we use distance decoding on the output. That is, if the distance between the output sequence and the expected mean of the word we are testing is close enough, we will accept it; if it is too large, we will reject it. This is exactly the coding idea used for the initial construction of a DI code for AWGN channels \cite{SPBD:DI_power} and fading channels \cite{DI-fading} and, while there are different methods (like galaxy coding \cite{galaxy-codes}) that can be tighter, distance decoding will suffice in this article. Indeed, galaxy codes can improve the prefactor multiplying the first order of the rate in a linearithmic code, but here we are just interested in the scaling of the rate, independent of the prefactor.

Let us start with a lower bound on the size of a maximal packing of spheres $\cS_i(n,r)$ of radius $r$ with centres in the sphere $\cS_-(n,\sqrt{nP}-r)$, which is the input space allowed by the power constraint. We use volumetric arguments:
\begin{equation}
\label{eq:code_size}
\begin{split}
N&=\frac{\textsc{Vol}\left[\bigcup_{i=1}^N\cS_i(n,r)\right]}{\textsc{Vol}\left[\cS_i(n,r)\right]}\\
&\geq\frac{\Delta_n \textsc{Vol}\left[\cS_-(n,\sqrt{nP}-r)\right]}{\textsc{Vol}\left[\cS_i(n,r)\right]}\\
&\geq2^{-n}\left[\frac{\sqrt{nP}-r}{r}\right]^n=\left(\frac{1-\epsilon}{2\epsilon}\right)^n,
\end{split}
\end{equation}
where $\Delta_n$ is the density of the packing which is lower bounded by $\Delta_n\geq2^{-n}$ by standard volumetric arguments \cite{CS:Packings_lattices} (we refer the reader to other DI-code constructions \cite{SPBD:DI_power,DI-fading,CDBW:DI_classical}, which also use this step, for a more detailed development); and where in the last equality we have defined $r:=\epsilon\sqrt{nP}$. Now, notice that for $\epsilon\in(0,\frac13)$ a constant, meaning that the radius of our packing $r=\Theta(\sqrt{n})$, the code can just be linear with rate $R:=\frac1n\log N\leq \log\frac{1-\epsilon}{2\epsilon}$, a finite and positive constant. In contrast, if $\epsilon(n)\geq 0$ is a decreasing function of $n$ we can find higher order scaling.


Next, let us define for each message $i\in[N]$ our distance decoder as follows:
\begin{equation}\label{eq:decoding_region}
 \cD_i = \left\{ \vec{y}\in\cY^n:\|\vec{y}-A\vec{u_i}\|^2\leq\Tr\Sigma+4\nu_M n\sqrt{E_1(n)} \right\},
\end{equation}
with $\nu_M$ the maximum eigenvalue of the covariance matrix $\Sigma$, and the first type of error written in the exponential form $\lambda_1^{-nE_1(n)}$ and $E_1(n)\leq1$ (in Corollary \ref{cor:big_exp} we will pick a slightly different decoder to study the big exponent regime $E_1>1$). The receiver will identify the codeword $\vec{u_i}$ if the output sequence fulfils the distance relation above with respect to the expected output when $\vec{u_i}$ is input. Now, a missed identification will happen for all $i\in[N]$ when
\begin{align}
P_{e,1}(i) 
 &=\Pr\left\{ \|Y^n-A\vec{u_i}\|^2 > \Tr\Sigma+4\nu_M n\sqrt{E_1(n)}\middle|\vec{u_i}\right\}\nonumber \\
 &\leq\Pr\left\{\|Z^n\|^2>\Tr\Sigma+\delta_n \right\},\label{eq:E1}
\end{align}
with $Z^n\sim\cN(0,\Sigma)$ as described by the input-output relation in Eq.~\eqref{eq:input-output}, and $\delta_n\leq4\nu_M nE_1(n)$ to be fixed later. We aim to bound the probability above using a Chernoff bound. Before doing so, notice that we can write the random variable $Z^n=\Sigma^{1/2}L^n$ with $L^n\sim\cN(0,\1_n)$, such that $S:=\|Z^n\|^2=(L^n)^T\Sigma L^n=\sum_{i=1}^n\nu_i L_i^2$, with $\nu_1,\dots,\nu_n>0$ the eigenvalues of the covariance matrix. To alleviate the notation in the following calculations let us define $\mu:=\EE(S)=\Tr \Sigma =\sum_{i=1}^n\nu_i$, $v=\Tr \Sigma^2 =\sum_{i=1}^n\nu_i^2$, and $\nu_M=\max_i\nu_i$. We start by calculating the moment generating function of the centred random variable $S$. For any $0<s<\frac{1}{2\nu_M}$:
\[
\EE\!\left[e^{s(S-\mu)}\right]\!
=\!e^{-s\mu}\prod_{i=1}^n\EE\left[e^{s\nu_i(L_i^2)}\right]\!
=e^{-s\mu}\prod_{i=1}^n\left(1-2s\nu_i\right)^{-\frac12}\!,
\]
where the last equality follows from the standard Gaussian integral resulting from the expectation value. Taking now the logarithm and using the property $-\ln(1-x)\leq x+\frac{x^2}{2(1-x)}$ (valid for $0\leq x<1$, and therefore for $x=2s\nu_i$ for all valid values of $s$) we obtain:
\[
\begin{split}
\ln \EE\left[e^{s(S-\mu)}\right]&=-s\mu-\frac12\sum_{i=1}^n \ln(1-2s\nu_i)\\
&\leq-s\mu+s\mu+s^2\sum_{i=1}^n\frac{\nu_i^2}{1-2s\nu_i}\\
&\leq\frac{s^2v}{1-2s\nu_M}.
\end{split}
\]
Now we are ready to apply the Chernoff bound:
\begin{equation*}
\begin{split}
\Pr \{\|Z^n\|^2>\Tr&\Sigma+\delta_n \}
=\Pr\left\{S-\mu>\delta_n \right\}\\
&=\inf_{0<s<\frac{1}{2\nu_M}}\Pr\left\{e^{s(S-\mu)}>e^{s\delta_n} \right\}\\
&\leq\inf_{0<s<\frac{1}{2\nu_M}} e^{-s\delta_n}\EE\left[e^{s(S-\mu)}\right]\\
&\leq\inf_{0<s<\frac{1}{2\nu_M}} \exp\left[-s\delta_n +\frac{s^2v}{1-2s\nu_M}\right].
\end{split}
\end{equation*}
By standard critical point calculation, we observe that when choosing $\delta_n=\frac{2sv(1-s\nu_M)}{(1-2s\nu_M)^2}$ the exponent above is minimum over all values of $s\in(0,\frac{1}{2\nu_M})$ and can take any value in $(0,\infty)$. Then, we can choose the negative exponent to be $nE_1(n)>0$:
\begin{equation*}
   \Pr\left\{S-\mu>\delta_n \right\}\leq\exp\left[-\frac{s^2v}{(1-2s\nu_M)^2}\right]:=e^{-nE_1(n)}.
\end{equation*}
Putting together the equation above and the value we gave to $\delta_n$ to minimize the exponent, one observes the following relation:
\begin{equation}\label{eq:delta_value}
    \delta_n=2\nu_Mn\left[E_1(n)+\sqrt{E_1(n)}\right]\leq4n\nu_M\sqrt{E_1(n)},
\end{equation}
where the last inequality is true if $E_1(n)\leq1$, we are interested in this regime because we aim to study what happens for slowly vanishing errors (small error exponents). See Remark \ref{cor:big_exp} for further information on the regime $E_1\geq1$. 
Finally, by invoking Eq.~\eqref{eq:E1}, we get:
\begin{equation}\label{eq:Pe1_bound}
\begin{split}
P_{e,1}(i):&=\Pr \left\{\|Z^n\|^2>\Tr\Sigma+4n\nu_M\sqrt{E_1(n)} \right\}\\
&<\Pr \{\|Z^n\|^2>\Tr\Sigma+\delta_n \}\leq e^{-nE_1(n)}.
\end{split}
\end{equation}
The equation above directly relates the first type of error with the size of the decoding region. By choosing different values of the error exponent, we can now see the tradeoff between these two objects in our construction.

Let us work similarly for the other error: a false identification. For any two different messages $i\neq j\in[N]$ with a linearly transformed distance vector between the corresponding codewords defined as $\vec{d}:=A(\vec{u_j}-\vec{u_i})$ a false identification will happen when
\begin{align}
P_{e,2}(i,j)\!&=\! \Pr\left\{\|Y^n-A\vec{u_i}\|^2\!\leq\!\Tr\Sigma+4\nu_Mn\sqrt{E_1(n)}\middle|\vec{u_j}\right\} \nonumber\\
&=\!\Pr\left\{\|Z^n-\vec{d}\|^2\!\leq\!\Tr\Sigma+4\nu_Mn\sqrt{E_1(n)}\right\}.\label{eq:Pe2}
\end{align}
We start, as we did before, by defining the random variable $R:=\|Z^n-\vec{d}\|^2$ with mean $\mu:=\EE[R]=\Tr\Sigma+\|\vec{d}\|^2$, the diagonalization of the covariance matrix $\Sigma=U\Lambda U^T$ with $\Lambda=\text{diag}(\nu_1,\dots,\nu_n)$, and the distance vector in the diagonal basis $\vec{b}$ with coordinates $b_i:=\frac{1}{\sqrt{\nu_i}}(U^T\vec{d})_i$. We keep the definitions of $v:=\Tr \Sigma^2$ and $\nu_M=\max_i\nu_i$, and we redefine the random variable $L^n:=U^T\Sigma^{-1/2}Z^n$ so $L^n\sim\cN(0,\1_n)$. 
Notice that the change of basis using $U$ for the variable $Y$ was not needed for the first type of error but it is critical now so that we can write $R=\sum_{i=1}^n\nu_i(L_i-b_i)^2$. Now, the logarithm of the moment generating function of the centred random variable $R$ is for all $s>0$:
\[
\begin{split}
\ln\EE[e^{-s(R-\mu)}]&=s\mu+\ln\prod_{i=1}^n\EE[e^{-s\nu_i(L_i-b_i)^2}]\\
&=\sum_{i=1}^n\left[s\nu_i+\frac{2s^2\nu_i^2b_i^2}{1+2s\nu_i}-\frac12\ln(1+2s\nu_i)\right]\\
&\leq\sum_{i=1}^n\left[\frac{2s^2\nu_i^2}{1+2s\nu_i}(1+b_i^2)\right]\\
&\leq2s^2\left[\sum_{i=1}^n\nu_i^2+\sum_{i=1}^n\nu_i^2b_i^2\right]\\
&\leq 2ns^2\nu_M\left(\nu_M+P\right),
\end{split}
\]
where we have used our change of random variables in the first equality, and calculated the Gaussian integral resulting from the coordinate-wise expectation value on the second (it converges for $s>-\frac{1}{2\nu_M}$ and therefore for all $s>0$). 
In the first inequality, we applied the logarithmic bound $-\frac12\ln(1+2x)\leq-\frac{x}{1+2x}$ (valid for all $x\geq0$); in the second, we bounded $1+2s\nu_i>1$ which allows us to discard the denominator; and finally, we have bounded the sums $\sum_i\nu_i^2\leq n\nu_M^2$ and $\sum_i\nu_i^2b_i^2\leq\nu_M\|\vec{d}\|^2\leq\nu_MnP$ in the last inequality. With this, the left tail Chernoff bound gives us
\[
\begin{split}
\Pr\{R\leq\mu- \Delta\}&\leq\inf_{s>0}\exp\left[-s\Delta+2ns^2\nu_M\left(\nu_M+P\right)\right]\\
&=\exp\left[\frac{-\Delta^2}{8n\nu_M(\nu_M+P)}\right],
\end{split}
\]
where in the second line we have calculated the minimum over $s>0$. Now, choosing $\Delta:=4n\epsilon(n)^2P-4n\nu_M \sqrt{E_1(n)}>0$, and noticing that $\|\vec{d}\|^2\geq4r^2=4\epsilon(n)^2nP$ we finally obtain:
\begin{equation}\label{eq:E2}
\begin{split}
P_{e,2}(i,j)&=\Pr\{R\leq\mu- \|\vec{d}\|^2+4\nu_M n\sqrt{E_1(n)}\}\\
&\leq\Pr\{R\leq\mu- 4\epsilon(n)^2nP+4\nu_M n\sqrt{E_1(n)}\}\\
&\leq\exp\left[-n\frac{2\left(\epsilon(n)^2P-\nu_M\sqrt{E_1(n)}\right)^2}{\nu_M(\nu_M+P)}\right]\\
&:=e^{-nE_2(n)}.
\end{split}
\end{equation}
The condition $\Delta>0$ is the feasibility condition needed to have a non-trivial left tail and, with our choice of $\Delta$, it can be re-written as:
\begin{equation}\label{eq:condition}
P\epsilon(n)^2>\nu_M \sqrt{E_1(n)}.
\end{equation}

We now have all the tools to study the tradeoff between rate and reliability in the particular distance decoding construction above. One way to do this is by first defining different type I error settings (by suitably choosing $E_1(n)$). Then, the condition in Eq.~\eqref{eq:condition} will lower bound $\epsilon(n)$ which we will plug into Eq.~\eqref{eq:code_size}, obtaining a lower bound on the size of the code. Finally, through Eq.~\eqref{eq:E2} we will study the scaling of the second type of error. 


\begin{theorem}\label{thm:ach_linear}
Let $\nu_M$ be the maximum eigenvalue of the covariance matrix $\Sigma$, and $P$ the maximum power constraint of the linear Gaussian channel. Then, for any arbitrarily small $\tau>0$ we can construct a DI code with linear rate lower bounded as:
\[
R>\log\left(\sqrt{\frac{P}{4\nu_M(1+\tau) \sqrt{E_1}}}-\frac12\right),
\]
and exponentially fast vanishing errors $\lambda_1:=e^{-nE_1}$ with $0<\sqrt{E_1}< \min\{1,\frac{P}{9\tau\nu_M}\}$, and $\lambda_2:=e^{-nE_2}$ with $E_2=\frac{\tau^2E_1}{1+P/\nu_M}$.
\end{theorem}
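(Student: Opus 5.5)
The plan is to instantiate the distance-decoding construction above with a \emph{constant} type-I exponent $E_1(n)\equiv E_1$. The packing parameter $\epsilon$ is then fixed by saturating the feasibility condition \eqref{eq:condition} up to a factor $(1+\tau)$. Concretely, I would take $E_1$ constant with $0<\sqrt{E_1}<1$, so that the estimate \eqref{eq:delta_value} (which needs $E_1\le 1$) is valid. I would then choose
\[
  \epsilon^2:=\frac{(1+\tau)\,\nu_M\sqrt{E_1}}{P},
\]
so that $P\epsilon^2-\nu_M\sqrt{E_1}=\tau\nu_M\sqrt{E_1}>0$, i.e.\ condition \eqref{eq:condition} holds strictly. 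The codewords are the centres of a maximal packing of radius $r=\epsilon\sqrt{nP}$ inside $\cS_-(n,\sqrt{nP}-r)$, and the decoders are the regions $\cD_i$ of \eqref{eq:decoding_region}.

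Next I would read off the three quantities from the bounds already established.
\begin{itemize}
\item \textbf{Type-I error.} By \eqref{eq:Pe1_bound}, $\lambda_1\le e^{-nE_1}$.
\item \textbf{Type-II error.} Substituting the choice of $\epsilon$ into \eqref{eq:E2} gives the exponent
\[
  \frac{2(\tau\nu_M\sqrt{E_1})^2}{\nu_M(\nu_M+P)}=\frac{2\tau^2E_1}{1+P/\nu_M}\ \ge\ \frac{\tau^2E_1}{1+P/\nu_M}=E_2,
\]
so $\lambda_2\le e^{-nE_2}$ with a factor $2$ to spare.
\item \textbf{Rate.} By \eqref{eq:code_size}, $R\ge\log\frac{1-\epsilon}{2\epsilon}=\log\bigl(\frac{1}{2\epsilon}-\frac12\bigr)$. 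Inserting $\frac{1}{2\epsilon}=\sqrt{P/(4\nu_M(1+\tau)\sqrt{E_1})}$ yields exactly the expression in the statement.
\end{itemize}
To obtain the strict inequality for $R$, I would either note that the volumetric bound $\Delta_n\ge 2^{-n}$ is strict for a maximal packing, or shrink $\tau$ infinitesimally.

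The remaining work is bookkeeping of the side conditions, and I expect this to be the main obstacle. The construction \eqref{eq:code_size} needs $\epsilon<\frac13$ (equivalently $\frac{1-\epsilon}{2\epsilon}>1$) for the rate to be positive, and $r<\sqrt{nP}$ for the inner ball to be nonempty. With our $\epsilon$, the requirement $\epsilon^2<\frac19$ reads $\sqrt{E_1}<\frac{P}{9(1+\tau)\nu_M}$. This is what the hypothesis on $\sqrt{E_1}$ is meant to guarantee, and I would check carefully how it matches the stated bound $\frac{P}{9\tau\nu_M}$, adjusting the constant if needed. I would also double-check that $\epsilon^2P\le P$ keeps $\Delta$ consistent with $\|\vec d\|^2\le 4nP$, as used in the moment-generating-function bound, and that the logarithm bases in $R$ and in the exponents are used consistently. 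Once these are verified, the theorem follows directly from \eqref{eq:Pe1_bound}, \eqref{eq:E2} and \eqref{eq:code_size}.
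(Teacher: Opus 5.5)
Your proposal follows the paper's proof exactly: a constant $E_1\le 1$, the choice $\epsilon^2=(1+\tau)\nu_M\sqrt{E_1}/P$, and $\lambda_1$, $\lambda_2$, $R$ read off from \eqref{eq:Pe1_bound}, \eqref{eq:E2} and \eqref{eq:code_size}; both of your bookkeeping flags are justified, since $\epsilon<\frac13$ really requires $\sqrt{E_1}<\frac{P}{9(1+\tau)\nu_M}$ (the stated $\frac{P}{9\tau\nu_M}$ does not guarantee it), and substituting into \eqref{eq:E2} gives $\frac{2\tau^2E_1}{1+P/\nu_M}$, twice the stated $E_2$. One caveat applies equally to the paper's proof: \eqref{eq:E2} rests on $\|\vec d\|^2\ge 4r^2$ for $\vec d=A(\vec{u_j}-\vec{u_i})$, which holds only if $A^\top A\succeq\1$, and for a contracting $A$ (e.g.\ $A=a\1$, an AWGN channel with effective power $a^2P$) an $A$-independent rate would contradict Theorem~\ref{thm:RR_upperbound} as $a\to 0$, so in general $P$ has to be replaced by $\sigma_{\min}(A)^2P$ throughout, with $\sigma_{\min}(A)$ the smallest singular value of $A$.
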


\begin{proof}
We follow the distance decoding construction developed in the present section. Given any $\tau>0$, let us start by fixing a constant $0<E_1\leq1$ (recall that we have fixed this upper bound in our construction just after Eq.~\eqref{eq:delta_value}, see Remark \ref{cor:big_exp} for the case where $E_1\geq1$), i.e. exponentially fast vanishing first type of error. The condition in Eq.~\eqref{eq:condition} tells us that $\epsilon^2>\nu_M\sqrt{E_1}/P$, so let us choose $\epsilon^2:=(1+\tau)\nu_M\sqrt{E_1}/P$. We have determined in the discussion just after Eq.~\eqref{eq:code_size} that $0<\epsilon<\frac13$, so we need the extra condition $\sqrt{E_1}<\frac{P}{9\tau\nu_M}$. Now, we can calculate the size of the code with Eq.~\eqref{eq:code_size}:
\begin{equation*}
N\geq\left(\frac{1}{2\epsilon}-\frac12\right)^n\geq\left(\frac{\sqrt{P}}{2\sqrt{\nu_M(1+\tau)\sqrt{E_1}}}-\frac12\right)^n,
\end{equation*}
and he rate calculation is completed by simply applying the linear rate definition $R:=\frac1n\log N$.
It is only left to study the second error exponent. We have $\epsilon(n)^2P-\nu_M\sqrt{E_1(n)}=\tau\nu_M\sqrt{E_1(n)}$, so using Eq.~\eqref{eq:Pe2} we get:
\[
E_2=\frac{2\nu_M^2\tau^2E_1}{\nu_M(\nu_M+P)}=\frac{2\tau^2E_1}{1+\frac{P}{\nu_M}},
\]
and the proof is completed.
\end{proof}

\begin{remark}\label{cor:big_exp}
We can recycle the construction above for bigger first type of error exponents $E_1>1$. Notice that, in this regime, the bound in Eq.~\eqref{eq:delta_value} is not true. However, we can easily correct it with $\delta_n\leq4n\nu_ME_1(n)$ [changing $\sqrt{E_1(n)}\to E_1(n)$]. The rest of the construction and the rate result in Theorem \ref{thm:ach_linear} can then also be reused just by equally changing all $\sqrt{E_1(n)}$ by $E_1(n)$. The error exponent conditions in Theorem \ref{thm:ach_linear} become $1<E_1<\frac{P}{9\tau\nu_M}$ and $E_2=\frac{\tau^2E_1^2}{1+P/\nu_M}$.
\end{remark}

We have proven that the explicit distance decoding construction described in the beginning of this section can indeed achieve linear codes with exponentially fast vanishing errors. We will now show that by tuning the construction parameters we can also obtain linearithmic codes with sub-exponentially vanishing errors:
\begin{theorem}\label{thm:achievability_linearithmic}
Given $\tau>0$ and $0<\beta<1$ we can construct a DI code of linearithmic size with linearithmic rate
\[
\frac{R(n)}{\log n}:=\frac{\log N}{n\log n}\geq\frac{\beta}{4}+O\left(\frac{1}{\log n}\right),
\]
and sub-exponentially fast vanishing errors $\lambda_1:=e^{-nE_1(n)}$ and $\lambda_2:=e^{-nE_2(n)}$, with $E_1(n)=n^{-\beta}$ and $E_2(n)=\frac{2\tau^2n}{1+P/\nu_M}n^{-\beta}$.
\end{theorem}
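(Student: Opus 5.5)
We reuse the distance-decoding construction of this section, as in the proof of Theorem \ref{thm:ach_linear}, but now let the packing parameter $\epsilon$ depend on $n$. First fix $E_1(n):=n^{-\beta}$. Since $0<\beta<1$, this satisfies $E_1(n)\le 1$, so the decoder \eqref{eq:decoding_region} and the bound \eqref{eq:Pe1_bound} apply directly. They give $\lambda_1\le e^{-nE_1(n)}=e^{-n^{1-\beta}}$, which vanishes sub-exponentially. The feasibility condition \eqref{eq:condition} requires $\epsilon(n)^2>\nu_M n^{-\beta/2}/P$. We therefore choose
\[
\epsilon(n)^2:=(1+\tau)\frac{\nu_M}{P}\,n^{-\beta/2},
\]
so that $\epsilon(n)\to 0$. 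For $n$ large enough we then have $\epsilon(n)<\frac13$, which is the range required after Eq.~\eqref{eq:code_size}.

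Next we bound the code size. By Eq.~\eqref{eq:code_size}, $N\ge\left(\frac{1-\epsilon(n)}{2\epsilon(n)}\right)^n$, so
\[
\frac{\log N}{n}\ \ge\ \log\frac{1}{\epsilon(n)}+\log\frac{1-\epsilon(n)}{2}
=\frac{\beta}{4}\log n-\frac12\log\frac{(1+\tau)\nu_M}{P}+\log\frac{1-\epsilon(n)}{2}.
\]
The last two terms are $O(1)$, since $\epsilon(n)\in(0,\frac13)$. Dividing by $\log n$ gives $\frac{R(n)}{\log n}\ge\frac{\beta}{4}+O\!\left(\frac{1}{\log n}\right)$. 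The exponent $\beta/4$ arises because the decoder threshold scales as $\sqrt{E_1}=n^{-\beta/2}$, and the radius ratio $1/\epsilon$ is the square root of $n^{\beta/2}$.

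For the second error we substitute into Eq.~\eqref{eq:E2}. With our choice of $\epsilon(n)$ we have $\epsilon(n)^2P-\nu_M\sqrt{E_1(n)}=\tau\nu_M n^{-\beta/2}$. This gives
\[
\lambda_2\le\exp\!\left[-n\,\frac{2\tau^2\nu_M^2 n^{-\beta}}{\nu_M(\nu_M+P)}\right]=\exp\!\left[-n\,\frac{2\tau^2 n^{-\beta}}{1+P/\nu_M}\right].
\]
That is, $nE_2(n)=\frac{2\tau^2}{1+P/\nu_M}\,n^{1-\beta}$. This matches the stated form $E_2(n)=\frac{2\tau^2 n}{1+P/\nu_M}n^{-\beta}$ once it is read as the total exponent $nE_2$; the factor $n$ in the statement appears to be a typo. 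In either reading the decay is sub-exponential and $\lambda_2\to0$.

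The main point to verify is the bookkeeping of the $n$-dependent parameters. Concretely, we need that $E_1(n)\le1$, that $\epsilon(n)<\frac13$ for all large $n$ (the analogue of the condition $\sqrt{E_1}<\frac{P}{9\tau\nu_M}$ holds eventually because $E_1(n)\to0$), and that the density bound $\Delta_n\ge2^{-n}$ and the Chernoff bounds hold uniformly in $n$. The Chernoff bounds do, because their constants depend only on $\nu_M$ and $P$. Everything else follows from the earlier estimates.
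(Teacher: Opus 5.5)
Your proposal is correct and follows the paper's proof essentially step by step. It uses the same choice $\epsilon(n)^2=(1+\tau)\nu_M n^{-\beta/2}/P$ satisfying Eq.~\eqref{eq:condition}, the same code-size estimate from Eq.~\eqref{eq:code_size}, and the same substitution into Eq.~\eqref{eq:E2}.

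You are also right that the factor $n$ in the stated $E_2(n)$ is a typo: the paper's own proof obtains $E_2(n)=\frac{2\tau^2E_1(n)}{1+P/\nu_M}$. Your phrase ``in either reading'' is slightly off, though. Read literally, the statement would claim $\lambda_2=e^{-cn^{2-\beta}}$, which is super-exponential and is not what the construction delivers.
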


\begin{proof}
Let us start by defining a polynomially-fast vanishing first type error exponent $E_1(n):=n^{-\beta}$ for $0<\beta<1$, which implies a sub-exponentially fast vanishing first type of error, and choosing $\epsilon(n)^2=(1+\tau)\nu_M\sqrt{E_1(n)}/P$ such that the condition in Eq.~\eqref{eq:condition} is respected. Notice that it is the same $\epsilon$ we chose in the proof of Theorem \ref{thm:ach_linear} but with the new definition of $E_1(n)=n^{-\beta}$, therefore, the exponent of the second type of error is also equal $E_2=\frac{2\tau^2E_1(n)}{1+P/\nu_M}$. It is only left to calculate the size of the code through Eq.~\eqref{eq:code_size}:
\[
N\geq\left(\frac{n^{\frac{\beta}{4}}\sqrt{P}}{2\sqrt{(1+\tau)\nu_M}}-\frac12\right)^n\geq\left(\frac{n^{\frac{\beta}{4}}\sqrt{P}}{4\sqrt{(1+\tau)\nu_M}}\right)^n.
\]
Then, the linear rate can be written as 
\[
\begin{split}
R:=\frac{\log N}{n}&\geq\log\left(\frac{n^{\frac{\beta}{4}}\sqrt{P}}{4\sqrt{(1+\tau)\nu_M}}\right)\\   
&=\frac{\beta}{4}\log(n)+\frac12\log\left(\frac{P}{16(1+\tau)\nu_M}\right).
\end{split}
\]
In other words, the linearithmic rate is $\frac{\log N}{n\log n}\geq\frac{\beta}{4}+O(\frac{1}{\log n})$. The proof is completed.
\end{proof}

Theorem \ref{thm:achievability_linearithmic} above shows that linearithmic codes can be constructed with errors that vanish at a sub-exponential speed. 
Furthermore, its analysis on the limit of $n\to\infty$ translates to a capacity lower bound $\dot{C}_{\text{DI}}\geq\frac14$ [see Eq.~\eqref{eq:capacity} and take $\beta\to1$]. 

One could wonder if smaller values of the error exponent (vanishing even faster than polynomially) could result in better rates. The answer is that, with the construction proposed here, we can only improve the sub-leading terms and not the first order of the rate. Indeed, repeating the methods of the proofs above, and remembering that the error exponents must be $E_{1,2}(n)=\omega(1/n)$ so that the errors vanish in the limit $\lim_{n\to\infty}\exp[-nE_{1,2}(n)]=0$, we observe from the condition in Eq.~\eqref{eq:condition} that the minimum order of epsilon we can have is $\epsilon(n)=\omega(n^{-1/4})$, which results on a leading order of $\frac14$ in the linearithmic rate. 

To conclude the achievability analysis let us add a note on the maximally asymmetric error regime case. Clearly, the converse results in Section \ref{ssec:asymmetric} and the matching linear scale code construction in Theorem \ref{thm:ach_linear} are enough to prove that DI codes over Gaussian channels in the maximally asymmetric error regimes behave linearly. Indeed, any code for exponentially fast vanishing errors is also a code for the (more permissive) Stein and Sanov regimes. 
Further intuition for the achievability in these maximal asymmetric cases can be gained by playing with different error settings in our construction, studying the condition in Eq.~\eqref{eq:condition}, the lower bound on the code size in Eq.~\eqref{eq:code_size} and $E_2(n)$ [as defined in Eq.~\eqref{eq:Pe2}]. The reader will readily see that, as already proved with our converse results, we can do no better than linear rates with our distance decoding construction unless \emph{both} errors vanish sub-exponentially.

\medskip
\section{Conclusions}
We have initiated the study of the rate-reliability tradeoff for deterministic identification over channels with continuous output. Specifically, we have tackled the general linear Gaussian channel, motivated by its vast range of potential applications. 

The results of this work show a strong resemblance with those for the general family of channels with discrete output, studied in \cite{CDBW:Reliability-TCOM}. We have shown that when the errors vanish exponentially fast, the typical linearithmic behaviour of DI over Gaussian channels is lost (Subsection \ref{ssec:converse}), and instead at most a linear rate of DI is possible, bounded in terms of the logarithm of the inverse exponent. As a matter of fact, this superlinear scaling is lost even if only one of the two errors vanishes exponentially fast, as seen in Subsection \ref{ssec:asymmetric}, mimicking the previous results from \cite{CDBW:Reliability-TCOM}. We also show how the linearithmic behaviour is recovered for slowly (sub-exponential) vanishing errors, and present an explicit code construction that matches the converse results in different error settings.

It is curious to study the nature of the rates in the different error scalings. In linear codes, we observe that the rates depend on channel parameters like the power constraint and the maximum eigenvalue of the covariance matrix at leading order (see Theorems~\ref{thm:RR_upperbound}, \ref{thm:stein_upperbound}, and \ref{thm:ach_linear}). However, in linearithmic codes, this dependence on channel parameters is lost at leading order and relegated to lower orders (see Theorems~\ref{thm:RR_upperbound}, and \ref{thm:stein_upperbound} with $E(n)=\Omega(1/n)$, and Theorem~\ref{thm:achievability_linearithmic}). Furthermore, in non-deterministic identification, where the length of the words can be exponential, the dependence on channel parameters appears again at leading order \cite{DI-fading,LDB:secure_ID_Gaussian}. Indeed, linear capacities for DI (where we can have exponentially fast vanishing errors) and exponential capacities for randomized identification over Gaussian channels (which coincide with the Shannon transmission linear capacities) are functions of the power constraint and certain relevant parameters of the covariance matrix; while linearithmic capacities for DI are just a real number, independent of $P$ and $\Sigma$. 
This is further evidence of the strange nature of DI in the linearithmic scale where, for example in general channels with discrete output, the linearithmic capacity depends on a complexity (fractal) measure of the output probability set, the Minkowski dimension, a scale invariant object which behaves very differently from the metric objects that usually define capacities in communication settings (like the divergence radius in Shannon's communication capacity, or the distortion metric in rate-distortion theory).


Given the present results and the already discussed similarity to general channels with discrete output, one might be led to expect finding similar results for general channels with arbitrary continuous output, beyond the Gaussian case studied here. How to do this, however, is not obvious. Indeed, in the present paper, we have managed to move from necessary requirements between output distributions to Euclidean distances between the corresponding inputs, allowing us to use volumetric arguments to complete the proofs. We can, of course, only use this tool thanks to the particular characteristics and symmetries of the channel, but it is not available in general. 
It is worth noting that for general channels with discrete output, we could move beyond the volumetric argument by observing that a natural metric arises for general product distributions over a finite alphabet (see \cite[Sect.~IV]{CDBW:DI_classical}), and we could then use packing and covering arguments on this metric to develop the proofs. Unfortunately, this general metric cannot be translated to product distributions on continuous spaces. Hence, as covering and packing numbers depend directly on the metric we choose, we can not reuse the ideas from the discrete case in the general continuous scenario. In other words, the lack of a general metric would force us to choose different metrics for different families of channels with continuous output, resulting in different results if covering and packing arguments were used. 

In any case, and as it can already be intuited from the discussion above, if one finds a good metric for a particular family of continuous channels, then the general methods developed in \cite{CDBW:DI_classical,CDBW:Reliability-TCOM} would be available to yield at least converse results for the capacity and rate-reliability upper bounds, showing the power of our abstract methods.

\bigskip
\noindent
{
\textbf{Acknowledgments.} 
PC and AW are supported by the Institute for Advanced Study of the Technical University Munich. 
HB, CD and PC acknowledge financial support from the Federal Ministry of Education and Research (BMBF) within the program ``Souver\"an.\ Digital.\ Vernetzt.'' as part of the joint project 6G-life (project IDs 16KISK002 and 16KISK263), as well as from the BMBF quantum programs QuaPhySI (grants 16KIS1598K and 16KIS2234), QUIET (grants 16KISQ093 and 16KISQ0170), and QD-CamNetz (grants 16KISQ077 and 16KISQ169). Further support was provided by the German Federal Ministry of Research, Technology and Space (BMFTR) under grants 16KIS2611 and 16KIS2602 (QSTARS), 16KISR027K and 16KISR038 (Q-TREX), and 16KIS2414 and 16KIS2415 (Research Hub 6G-life$^2$). The research is additionally funded by the German Research Foundation (DFG) under the project ``Post Shannon Theory and Implementation'' (grants BO 1734/38-1 and DE 1915/2-1), under grant BO 1734/20-1, and within Germany’s Excellence Strategy --- EXC 2050/2 (Project ID 390696704), Cluster of Excellence ``Centre for Tactile Internet with Human-in-the-Loop'' (CeTI), Technische Universit\"at Dresden. HB and CD further receive funding through the national initiative under grants 16KIS1003K and 16KIS1005, respectively. Support from the Bavarian Ministry of Economic Affairs, Regional Development and Energy within the project ``6G Future Lab Bavaria'' is also gratefully acknowledged.
AW is supported by the European Commission QuantERA project ExTRaQT (Spanish MICIN grant no.~PCI2022-132965); by the Spanish MICIN (project PID2022-141283NB-I00) with the support of FEDER funds; by the Spanish MICIN with funding from European Union NextGenerationEU (PRTR-C17.I1) and the Generalitat de Catalunya; by the Spanish MTDFP through the QUANTUM ENIA project: Quantum Spain, funded by the European Union NextGenerationEU within the framework of the "Digital Spain 2026 Agenda”; and by the Alexander von Humboldt Foundation. 
}

\bigskip

\AtNextBibliography{\small}
\printbibliography

@INPROCEEDINGS{LDB:secure_ID_Gaussian,
  author={Labidi, Wafa and Deppe, Christian and Boche, Holger},
  booktitle={ICASSP 2020 - 2020 IEEE International Conference on Acoustics, Speech and Signal Processing (ICASSP)}, 
  title={Secure Identification for Gaussian Channels}, 
  year={2020},
  volume={},
  number={},
  pages={2872-2876},
  doi={10.1109/ICASSP40776.2020.9054336}}

@inbook{Gallager:Gauss_chapter,
edition={1}, 
title = {Memoryless Channels with Discrete Time},
booktitle = {Information Theory and Reliable Communication},
chapter = {7},
publisher={Springer Verlag},
address = {Vienna},
series = {CIMS -- International Centre for Mechanical Sciences, Courses and Lectures},
volume = {30},
author={Robert G. Gallager}, 
year={1972},
doi = {10.1007/978-3-7091-2945-6}
}

@ARTICLE{AD:ID_ViaChannels,
  author={Ahlswede, Rudolf and Dueck, Gunter},
  journal={IEEE Transactions on Information Theory}, 
  title={Identification via channels}, 
  year={1989},
  volume={35},
  number={1},
  pages={15-29},
  doi={10.1109/18.42172}
}

@ARTICLE{CDBW:DI_classical,
  author={Colomer, Pau and Deppe, Christian and Boche, Holger and Winter, Andreas},
  journal={IEEE Transactions on Information Theory}, 
  title={Deterministic identification over channels with finite output: a dimensional perspective on superlinear rates}, 
  year={2025},
  volume={71},
  number={5},
  pages={3373-3396},
  doi={10.1109/TIT.2025.3531301}
}

@book{CS:Packings_lattices,
author = {Conway, John H. and Sloane, Neil J. A.},
year = {1988},
title = {Sphere Packings, Lattices and Groups},
publisher = {Springer Verlag},
series={Grundlehren der Mathematischen Wissenschaften},
volume = {290},
doi = {10.1007/978-1-4757-2016-7}
}

@ARTICLE{AC:DI,
  author={Ahlswede, Rudolf and Cai, Ning},
  journal={IEEE Transactions on Information Theory}, 
  title={Identification without randomization}, 
  year={1999},
  month={11},
  volume={45},
  number={7},
  pages={2636-2642},
  doi={10.1109/18.796419}
}

@ARTICLE{SPBD:DI_power,
  author={Salariseddigh, Mohammad J. and Pereg, Uzi and Boche, Holger and Deppe, Christian},
  journal={IEEE Transactions on Information Theory}, 
  title={{Deterministic Identification Over Channels With Power Constraints}}, 
  year={2022},
  volume={68},
  number={1},
  month={1},
  pages={1-24},
  doi={10.1109/TIT.2021.3122811}
}

@article{CMW16:Dh-to-D_alpha,
   title={{Strong Converse Exponents for a Quantum Channel Discrimination Problem and Quantum-Feedback-Assisted Communication}},
   volume={344},
   DOI={10.1007/s00220-016-2645-4},
   number={3},
   journal={Communications in Mathematical Physics},
   publisher={Springer Science and Business Media LLC},
   author={Cooney, Tom and Mosonyi, Mil\'an and Wilde, Mark M.},
   year={2016},
   month=may, 
   pages={797–829}
}

@ARTICLE{CDBW:Reliability-TCOM,
  author    = {Colomer, Pau and Deppe, Christian and Boche, Holger and Winter, Andreas},
  title     = {{Rate-Reliability Tradeoff for Deterministic Identification}},
  journal   = {IEEE Transactions on Communications},
  year      = {2025},
  note      = {Early access; also available as arXiv:2502.02389},
  doi       = {10.1109/TCOMM.2025.3594790}
}

@inbook{CT_book,
publisher = {John Wiley \&{} Sons, Ltd},
title = {Gaussian Channel},
booktitle = {Elements of Information Theory},
chapter = {9},
pages = {261-299},
author={Thomas M. Cover and Joy A. Thomas},
doi = {10.1002/047174882X.ch9},
year = {2005}
}

@inproceedings{CDBW:Reliability_ICC,
  title = {Rate-reliability tradeoff for deterministic identification},
  author = {Pau Colomer and Christian Deppe and Holger Boche and Andreas Winter},
  booktitle = {Proc. 2025 IEEE International Conference on Communications, 8–12 June 2025, Montreal, Canada},
  publisher = {IEEE},
  pages = {},
  year = {},
  doi = {}
}

@inproceedings{CDBW:Gaussian_ICC,
  title = {Rate-Reliability Tradeoff for Deterministic Identification over Gaussian Channels},
  author = {Pau Colomer and Christian Deppe and Holger Boche and Andreas Winter},
  booktitle = {Proc. 2026 IEEE International Conference on Communications, 24–28 May 2026, Glasgow, United Kingdom},
  publisher = {IEEE},
  pages = {},
  year = {},
  doi = {}
}

@incollection{6G_Book,
	author = {H. Boche and J.A. Cabrera and C. Deppe and P. Kutsevol and S. Wang and F.H.P. Fitzek and S. Hirche and W. Kellerer and W. Labidi and J. Rosenberger and R.F. Schaefer and C. von Lengerke and M. Wiese and S. Rezwan and P.K.H. Sheshagiri and R. Ezzine},
    editor = {F.H.P. Fitzek and H. Boche and W. Kellerer and P. Seeling},
	title = {Novel Information Theoretical Approaches},
	booktitle = {6G-life: Unveiling the Future of Technological Sovereignty, Sustainability and Trustworthiness},
	year = {2026},
    month = {2},
    publisher = {Academic Press},
    isbn = {978-0443274107},
    language = {en},		
}

@incollection{6G_Book_2,
	author = {F.H.P. Fitzek and P. Schwenteck and H. Boche and W. Kellerer and G.T. Nguyena and P. Seeling},
    editor = {F.H.P. Fitzek and H. Boche and W. Kellerer and P. Seeling},
	title = {6G Perspective of Mobile Network Operators, Manufacturers, and Verticals},
	booktitle = {6G-life: Unveiling the Future of Technological Sovereignty, Sustainability and Trustworthiness},
	year = {2026},
    month = {2},
    publisher = {Academic Press},
    isbn = {978-0443274107},
    language = {en},	
}

@article{HanVerdu:ID,
  author={Han, Te Sun and Verd\'u, Sergio},
  title={New results in the theory of identification via channels}, 
  year={1992},
  month={1},
  journal = {IEEE Transactions on Information Theory},
  volume={38},
  number={1},
  pages={14-25},
  doi={10.1109/18.108245}
}

@INPROCEEDINGS{DI-fading,
  author={Salariseddigh, Mohammad J. and Pereg, Uzi and Boche, Holger and Deppe, Christian},
  booktitle={Proc. 2020 IEEE Information Theory Workshop (ITW)}, 
  title={{Deterministic Identification Over Fading Channels}}, 
  year={2021},
  pages={1-5},
  doi={10.1109/ITW46852.2021.9457587}
}

@INPROCEEDINGS{DI-steins,
  author={Colomer, Pau and Deppe, Christian and Boche, Holger and Winter, Andreas},
  booktitle={Proc. 2025 IEEE European Wireless}, 
  title={Deterministic Identification in Maximally Asymmetric Error Regimes}, 
  year={2025},
  pages={},
  doi={}
}

@ARTICLE{DI-poisson_mc,
  author={Salariseddigh, Mohammad J. and Jamali, Vahid and Pereg, Uzi and Boche, Holger and Deppe, Christian and Schober, Robert},
  journal={IEEE Transactions on Molecular, Biological and Multi-Scale Communications}, 
  title={{Deterministic Identification for Molecular Communications Over the Poisson Channel}}, 
  year={2023},
  month={4},
  volume={9},
  number={4},
  pages={408-424},
  doi={10.1109/TMBMC.2023.3324487}}

@unpublished{galaxy-codes,
    author = {Holger Boche and Christian Deppe and Safieh Mahmoodi and Golamreza Omidi},
    title = {{Galaxy Codes: Advancing Achievability for Deterministic Identification via Gaussian Channels}},
    note = {arXiv[cs.IT]:2501.12548},
    year={2025},
    month={1},
    doi={10.48550/arXiv.2501.12548}
}

@book{LieseVajda,
    author = {Friedrich Liese and Igor Vajda},
    title = {Convex Statistical Distances},
    publisher = {B. G. Teubner Verlagsgesellschaft},
    address = {Leipzig},
    year = {1987},
    series = {Texte zur Mathematik},
    volume = {95},
    isbn = {3-322-00428-7}
}

\end{document}